\documentclass[11pt]{article}
\usepackage[utf8]{inputenc}
\usepackage[T1]{fontenc}
\usepackage{natbib}
\usepackage{xurl}
\usepackage{doi}

\usepackage[final]{microtype}
\usepackage{amsmath,amssymb,amsfonts,amsthm,mathtools}
\usepackage{newtxtext,newtxmath}
\usepackage{bm}
\usepackage{booktabs}
\usepackage{array}
\usepackage[dvipsnames]{xcolor}
\usepackage{enumitem}
\usepackage{geometry}
\usepackage[nameinlink,capitalise,noabbrev]{cleveref}
\usepackage{comment}
\usepackage[nameinlink]{cleveref}
\setlist[itemize]{leftmargin=*,topsep=3pt,itemsep=2pt,parsep=0pt}
\setlist[enumerate]{leftmargin=*,topsep=3pt,itemsep=2pt,parsep=0pt}
\definecolor{LinkBlue}{RGB}{18,63,120}
\newcolumntype{L}[1]{>{\raggedright\arraybackslash}p{#1}}
\hypersetup{
  colorlinks=true,
  linkcolor=LinkBlue,
  citecolor=LinkBlue,
  urlcolor=LinkBlue,
  pdfborder={0 0 0}
}
\theoremstyle{plain}
\newtheorem{theorem}{Theorem}[section]
\newtheorem{proposition}[theorem]{Proposition}
\newtheorem{lemma}[theorem]{Lemma}
\newtheorem{corollary}[theorem]{Corollary}
\theoremstyle{definition}
\newtheorem{definition}[theorem]{Definition}

\theoremstyle{remark}
\newtheorem{remark}[theorem]{Remark}
\crefname{theorem}{theorem}{theorems}
\Crefname{theorem}{Theorem}{Theorems}
\crefname{proposition}{proposition}{propositions}
\Crefname{proposition}{Proposition}{Propositions}
\crefname{lemma}{lemma}{lemmas}
\Crefname{lemma}{Lemma}{Lemmas}
\crefname{corollary}{corollary}{corollaries}
\Crefname{corollary}{Corollary}{Corollaries}
\crefname{definition}{definition}{definitions}
\Crefname{definition}{Definition}{Definitions}
\crefname{example}{example}{examples}
\Crefname{example}{Example}{Examples}
\crefname{remark}{remark}{remarks}
\Crefname{remark}{Remark}{Remarks}

\newcommand{\bw}{\bm w}
\newcommand{\bq}{\bm q}
\newcommand{\bx}{\bm x}
\newcommand{\bmu}{\bm\mu}
\newcommand{\bzero}{\bm 0}
\newcommand{\E}{\mathbb E}
\newcommand{\Prob}{\mathbb P}
\newcommand{\OPT}{\mathrm{OPT}}
\newcommand{\DEC}{\mathrm{DEC}}
\newcommand{\REFone}{\mathrm{REF}_1}

\newcommand{\USW}{\mathrm{SW}}
\newcommand{\eps}{\varepsilon}
\newcommand{\ind}{\mathbf 1}

\title{Approximating Optimal Welfare in Complementary Allocation\\ under Decentralized Information}
\author{Meryem Essaidi}
\date{}

\hypersetup{
  colorlinks=true,
  pdfborder={0 0 0},
  pdftitle={Approximating Optimal Welfare in Complementary Allocation under Decentralized Information},
  pdfauthor={Meryem Essaidi},
  pdfsubject={Decentralized allocation of complementary resources under local information},
  pdfkeywords={algorithmic game theory, communication, complementary allocation, decentralized information, threshold protocols}
}

\begin{document}

\maketitle

\begin{abstract}
Complementary resources are often allocated by agencies that can only see parts of an individual’s need profile; and thus may know that an individual lacks its own resource without knowing whether supplying it will complete a useful bundle. We study allocation of $m$ divisible complementary resources when each agency observes only its own coordinate of a baseline-access profile (i.e., only one of the individual's $m$ coordinates.) 

To isolate information from incentives, we measure the welfare cost of this information split against a deliberately generous decentralized benchmark \textbf{DEC}: the best allocation rule that acts on local information alone, with fully cooperative agencies, and known population distribution and ex-ante resource capacities. Even against that benchmark, local information is arbitrarily costly: the best local-product allocation can lose a \textbf{linear-factor $\Theta(m)$} in the number of agencies, even under equal capacities; or an \textbf{inverse-optimum factor} $\Theta(1/\OPT)$, with just three agencies. 

Threshold referrals recover much of this loss: each agency reports if its coordinate lies below a public target, and a clearing rule allocates using only the joint ($m$-bit) reports. {\itshape For the algorithmic reader, in communication-complexity terms, the input is partitioned by coordinate and the message language is one round of threshold bits.} We show that aggregating these profile-types by their reports implements a rectangle under the optimal half-utility survival-curve giving a log-approximation of the fully-informed centralized optimum \textbf{OPT}. Our main tool is a profile-level charging certificate: if optimal utility is at least twice a service threshold, the optimum pays for that threshold on every deficient coordinate. This half-utility charging argument yields welfare at least 
$$\frac{\OPT}{4(1+\ln(2/\OPT))}$$ with one uplink bit per agency. 

Viewed as a mechanism-design question, we ask how much of the loss a coarse message can recover: The extremal survival-curve (\emph{equal-revenue family}) makes this tight, supplying the matching lower-bounds and an exact expression for the welfare attainable with any fixed number of ordered levels r, namely $\Theta(\log r)$ bits. Reporting a bucket in a geometric threshold-ladder thus improves this to a constant fraction of optimal welfare with doubly-logarithmic message-length: a staircase recovers $\OPT/8$ with $\Theta(\log\log(1/\OPT))$ bits per agency, and this is necessary within this class.\\ 

For a planner, the message is short: 
\par
{\centering\itshape
One threshold guarantees a logarithmic approximation;\\
a staircase with log-of-log bits, a constant-factor approximation.\\
\par}

$ $\\
These are tight within the model, and the extremal family admits an exact (optimal) welfare formula. Thresholds are chosen offline, and communication is limited to one-round. These results isolate \emph{how a simple reporting language converts severe local-info losses into constant-factor recovery}; thus quantifying the value of coarse communication; without agencies revealing their data.
\end{abstract}

\section{Introduction}
\label{sec:introduction}

Allocation across agencies is often decentralized precisely where resources are complementary. One agency controls housing-support, another medical care, a third job-placement, and a fourth food-assistance; yet a person who needs several of these resources benefits only when those needs are met together. Food assistance alone may generate little welfare for a severely ill person whose unmet medical need remains the binding constraint. 

Because the effectiveness of each intervention hinges on matching support in the other deficient coordinates, welfare is governed by the individual's smallest-endowed coordinate. For each individual, the natural allocation principle is therefore to water-fill the baseline-access levels from lowest to highest.

Agencies can share a common objective and know the population distribution, yet each sees only its own local coordinate. The induced loss is informational rather than strategic, and we isolate that loss in a minimal model. 

A type \(\bw=(w_1,\ldots,w_m)\in[0,1]^m\) records baseline access to \(m\) divisible resources. Agency \(j\) is constrained by an ex-ante supply \(q_j\) and controls only top-ups $x_j$ of coordinate \(j\); and can \emph{only} see \(w_j\). An individual with baseline-access $\bw$ receiving tops-ups $\bx$ capped \textit{Leontief-utility} welfare: 
\[u(\bw,\bx)=\min\{1,w_1+x_1,\ldots,w_m+x_m\}.\]
The benchmark \(\OPT\) observes each individual's full baseline-access profile \(\bw\). The local-information benchmark \(\DEC\) optimizes over every distribution-aware product rule \(x_j=x_j(w_j)\). Consequently, any lower-bound against \(\DEC\) persists even in an otherwise frictionless, non-strategic setting; with cooperative agencies that are both prior-aware and \emph{computationally unbounded.} Abstracting from these incentive and computational frictions isolates the \emph{irreducible welfare loss} due solely to decentralized information.

The only restriction is that each agency observes only its own realized coordinate.

The core question is then quantitative: \textbf{\emph{how much structured coarse communication is needed to recover full-information welfare \(\OPT\)?}} We study an operationally simple language of threshold-referrals. For a public target \(\mu\), agency \(j\) sends the bit \(\ind\{w_j<\mu\}\). The clearing rule sees the joint $m$-bit profile (but not raw coordinates), and chooses if individual is to be served at level \(\mu\). If served, each agency uses then supplies the top-up \((\mu-w_j)_+\). With an ordered ladder of targets, an agency sends only the bucket containing its coordinate and the clearing rule selects one ladder level.

\paragraph{Note on capacity-accounting and implementation:}
For analysis, we conservatively charge \(\mu\) rather than the smaller realized top-up, to make capacity-accounting \emph{independent of raw coordinates.} This coarse reporting language yields several operational gains. An agency need not transmit nor retain the magnitude of an individual's deficit, nor condition later decisions on it. It need only know if local baseline falls below \(\mu\). Since the threshold bit  (flagged as $1$ if below $\mu$) by itself suffices for capacity-use accounting, the resulting standardized bit reports reduce both centralized collection and agency-level retention of individual-level data; all the while allowing a clearing decision to be \emph{executed immediately and locally.} As to why this makes feasibility transparent: conditional on a reported profile, each admitted flagged coordinate is charged \(\mu\), independently of an individual's exact baseline. Capacity accounting is therefore \emph{baseline-agnostic} within each profile and reduces to a simple profile-level \textbf{linear-program.}

\subsection{Results}

The results separate the loss from local observation from the welfare recoverable through threshold messages.
\begin{center}
\footnotesize
\begin{tabular}{@{}
L{0.17\linewidth}
L{0.17\linewidth}
L{0.26\linewidth}
L{0.27\linewidth}@{}}
\toprule
%%%%%%%%%%%%%%%%%%%%%%%%
\textbf{Mechanism / Class} 
%%%%%%%%%%%%%%%%%%%%%%%%
& \textbf{Per-agency bits} 
& \textbf{Worst-case loss ratio} 
& \textbf{Tightness / exact families}
\\
\midrule
%%%%%%%%%%%%%%%%%%%%%%%%
Best local-product
%%%%%%%%%%%%%%%%%%%%%%%%
& $\displaystyle 0$
& \(\displaystyle\frac{\OPT}{\DEC}\geq\Omega(\max\{m,\mathbin{1\slash\OPT}\})\)
& \(\left.\begin{array}{l}
{\displaystyle\Theta(m)}; \textit{equal capacities.}\\
{\displaystyle\Theta(\mathbin{1\slash\OPT})}; \textit{three agencies.}\\[0.5em]
\end{array}\right.\)\\
%%%%%%%%%%%%%%%%%%%%%%%%
One threshold \(\displaystyle\mu^*\)
%%%%%%%%%%%%%%%%%%%%%%%%
& $\displaystyle 1$
& \(\displaystyle\frac{\OPT}{\REFone}\leq 4(1+\ln\mathbin{(2/\OPT)})\)
& \(\left.\begin{array}{l}
\textit{Equal-revenue;}\\
{\displaystyle1+\ln(\mathbin{1\slash\eps})}, \textbf{exact.}\\[0.5em]
\end{array}\right.\)\\
%%%%%%%%%%%%%%%%%%%%%%%%
\small Dyadic ladder \(\displaystyle\bmu\)
%%%%%%%%%%%%%%%%%%%%%%%%
& \(\displaystyle\Theta(\log\log(\mathbin{1\slash\OPT}))\)
& \(\displaystyle\frac{\OPT}{R(\bmu)}\leq 8\)
& \(\displaystyle\left.\begin{array}{l}
\text{Matching bit-order;}\\
\text{within this class.}\\[0.5em]
\end{array}\right.\)\\
\bottomrule
%%%%%%%%%%%%%%%%%%%%%%%%
\end{tabular}
\end{center}

\paragraph{Local information can be arbitrarily weak.}
The first family has \(m\) target types, each missing a different coordinate, and one all-zero decoy. Agency \(j\) cannot distinguish its target from the decoy because both have \(w_j=0\). Centralization obtains welfare \(mq\), whereas the exact local optimum is \(q/[1-(m-1)q]\), producing a \(\Theta(m)\) gap. A second construction uses only three agencies: two cannot distinguish target from decoy, while the third has scarce supply. Its exact local optimum yields a \(\Theta(1/\OPT)\) gap. These are lower bounds against the best local policy, not examples of myopic behavior.

\paragraph{One common threshold \(\mu\) gives logarithmic recovery.}
Fix a full-information optimum and let \(U=u^{\OPT}/2\). If \(U(\bw)\ge\mu\) but \(w_j<\mu\), then the optimum must spend more than \(\mu\) in coordinate \(j\). As such, the mass of high-\(U\) types in any referral profile gives a feasible solution to a conservative profile-admission LP.  At any \(\mu\), this solution certifies welfare \(\mu\Pr[U\ge\mu]\). Truncating \(U\) below \(\OPT/4\) loses at most \(\OPT/4\), while the best rectangle under the remaining survival curve loses \textbf{only a logarithmic factor}; proving the guarantee.

The logarithm can't be removed by choosing a better single threshold. In a two-agency family, types are \((s,0)\), the first agency has zero supply, and \(s\) has an equal-revenue tail. Full information spends \(s\) only in the second coordinate and obtains \(\E s\). A threshold above \(s\) would require forbidden spending in coordinate one; a threshold below \(s\) contributes at most the equal-revenue rectangle \(\mu\Pr[s\ge\mu]\). \emph{Every threshold therefore obtains at most \(\eps\), and this value is attained, while \(\OPT=\eps(1+\ln(1/\eps))\).}

\paragraph{Ordered geometric ladders give a tight doubly-logarithmic bit frontier.}
For levels \(\mu_1<\cdots<\mu_r\), the same charging argument implements the lower staircase \(g(U)=\max\{\mu_\ell:\mu_\ell\le U\}\). Choosing a dyadic ladder from \(\OPT/4\) to \(1/2\) ensures \(g(U)\ge U/2\) on the retained range. Truncation retains welfare at least \(\OPT/4\), and geometric rounding-down loses at most a factor of two; thus certifying \textbf{welfare at least \(\OPT/8\).}

There are \(r\!\!=\!\!\Theta(\log(1/\OPT))\) levels but only \(r\!+\!1\) local messages, so the uplink message length is \(\Theta(\log\log(1/\OPT))\) bits per agency. On the equal-revenue family, each ordered service level contributes at most \(\eps\), which forces \(r\!=\!\Omega(\log(1/\OPT))\) for constant-factor recovery. In fact, \emph{the optimal ladder on this family is geometric between \(\eps\) and one, and its welfare has an exact closed form} (\Cref{cor:exact-ladder}).

\subsection{Positioning and scope}

Our price ratio is in the tradition of worst-case efficiency measures \citep{koutsoupias1999worst,papadimitriou2001algorithms,bertsimas2011price}, but the constraint here is decentralized observation. This connects to team decision problems with dispersed information \citep{marschak1972economic} and to communication requirements in allocation \citep{nisan2006communication,blumrosen2007severely,dobzinski2019economic}. The mechanism is also related in spirit to simple posted instruments \citep{myerson1981optimal,hartline2009simple}: a threshold turns a complex distribution into a rectangle under a survival curve.

Recent work uses threshold queries to elicit agents' preferences in one-sided matching \citep{ma2021improving,latifian2024distortion} and studies allocation from partial rankings \citep{halpern2021fair}. Here, by contrast, no single agent holds the full type: different agencies observe different coordinates of one individual's baseline, and complementary welfare and resource capacities couple their decisions. Multi-resource allocation is central in dominant-resource, market-equilibrium, and Leontief models \citep{ghodsi2011dominant,parkes2015beyond,eisenberg1959consensus,bei2022fair}; our distinguishing restriction is coordinate-split observation. One-bit congestion broadcasts have also been used for iterative distributed resource allocation \citep{alam2018communication}; our messages instead travel once, upward from locally informed agencies to a clearing rule. Finally, unlike work on jointly private allocation \citep{hsu2016private}, the threshold messages here carry no formal privacy guarantee.

\paragraph{Information structure and scope.}
Public primitives \((F,\bq)\) are fixed offline. Thus \(\DEC\) gives each coordinate rule full knowledge of the environment; withholding only an individual's other realized coordinates. The lower bounds in \Cref{thm:local-loss} thus isolate coordinate-local observation as the informational bottleneck (rather than computation or incentives); and these bounds continue to hold when local rules have less prior information. %

Agencies observe their coordinates and execute prescribed threshold maps. The trusted clearing rule receives an \(m\)-bit referral profile for one threshold, or one bucket index from each agency for a threshold ladder; it never sees raw coordinates. 

Its admission kernel is chosen offline and is distribution-dependent: it computes profile-masses and admission-probabilities using \((F,\bq)\). (which may, for instance, be estimated from historical administrative data). \textbf{The paper takes them as given and isolates the welfare consequences of decentralized observation.} Estimation of \(F\) from finite data, uncertainty about capacities, strategic communication, interaction, and cryptographic privacy guarantees lie outside the model.

The resulting frontier applies to one-round common-threshold and ordered-threshold protocols under ex-ante supplies. A companion scalar-ceiling result extends this to unrestricted finite-transcript protocols: after padding the protocol tree, a \(b\)-bit protocol has at most \(2^b\) effective transcript cells, while \emph{indistinguishable decoy-padding with auxiliary coordinates conveys no additional information} about the active scalar parameter.

\subsection{Organization}

\Cref{sec:model} specifies the stochastic protocol and its communication accounting. \Cref{sec:local} gives the local-information lower bounds. \Cref{sec:one-threshold} proves the one-bit guarantee and its exact equal-revenue obstruction. \Cref{sec:ladders} proves the ordered-ladder frontier. \Cref{sec:implementation} records computation and finite-population scope. Full proofs appear in \Cref{app:proofs}.
\section{Model and protocol semantics}\label{sec:model}

Consider \(m\ge2\) agencies and a unit mass of individuals. A type \(\bw=(w_1,\ldots,w_m)\) is drawn from a public Borel distribution \(F\) on \([0,1]^m\). Agency \(j\) observes only \(w_j\) and controls resource \(j\), whose per-capita supply is \(q_j\ge0\).

An allocation is a measurable top-up map \(X:[0,1]^m\to\mathbb R_+^m\). It is feasible if
\begin{equation}
   0\le X_j(\bw)\le 1-w_j
   \quad\text{and}\quad
   \E_F[X_j(\bw)]\le q_j
   \qquad(j\in[m]).
   \label{eq:feasibility}
\end{equation}
The pointwise upper bound is without loss as service beyond one can't increase capped utility. Welfare is
\[u^X(\bw)=\min\{1,\min_{j\in[m]}(w_j+X_j(\bw))\},
\qquad\USW(X)=\E_F[u^X(\bw)].\]

\subsection{Information benchmarks}

The full-information and local-product values are
\begin{equation}
  \OPT=\sup_{X\ \mathrm{feasible}}\USW(X),
  \qquad
  \DEC=
  \sup_{\substack{X\ \mathrm{feasible}:\\
                    X_j(\bw)=x_j(w_j)\ \forall j}}
       \USW(X).
  \label{eq:benchmarks}
\end{equation}
The one-coordinate functions \(x_j\) may be chosen with complete knowledge of \((F,\bq)\); only their realized input is local. We use \(\OPT/\DEC\) as the price of local information, with the usual value \(+\infty\) when \(\OPT>0=\DEC\).

Deterministic local-product rules suffice even if one initially permits type-independent private or shared randomness.

\begin{lemma}[Interim reduction]
\label{lem:interim}
For every feasible randomized local-product rule there is a deterministic local-product rule with the same expected coordinate loads and weakly larger expected welfare.
\end{lemma}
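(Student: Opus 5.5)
We model a randomized local-product rule as follows. A shared random seed \(\xi\) is drawn independently of \(\bw\) from some probability space. Each agency may also hold a private seed \(\zeta_j\), and we fold all of these into \(\xi\). Agency \(j\) then allocates \(X_j=x_j(w_j,\xi)\). Feasibility in the randomized sense means \(0\le x_j(w_j,\xi)\le 1-w_j\) pointwise and \(\E_{F,\xi}[x_j(w_j,\xi)]\le q_j\).

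The natural first idea is to average over \(\xi\) coordinatewise, taking \(\bar x_j(w_j)=\E_\xi[x_j(w_j,\xi)]\). This preserves each expected load exactly, by Fubini and independence of \(\xi\) from \(\bw\), and it respects the pointwise bound. Welfare, however, moves the wrong way. The map \(\bx\mapsto\min\{1,\min_j(w_j+x_j)\}\) is concave, so Jensen gives \(\USW(\bar x)\ge \E_\xi[\USW(x(\cdot,\xi))]\). That is the right direction only for the averaged rule relative to a fixed seed; it does not compare the averaged rule to the best seed. So averaging is in fact not the danger. Jensen's inequality \(u(\bw,\E_\xi x)\ge \E_\xi u(\bw,x(\cdot,\xi))\) holds pointwise in \(\bw\), and averaging directly yields a deterministic product rule \(\bar x_j(w_j)\) with the same loads and weakly larger welfare.

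The plan is therefore as follows.

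First, fix the joint probability space \((\bw,\xi)\) with \(\xi\perp\bw\), and define \(\bar x_j(w_j)=\int x_j(w_j,\xi)\,dP(\xi)\). Measurability of \(\bar x_j\) follows from Fubini applied to the jointly measurable \(x_j\).

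Second, verify feasibility. The pointwise bounds \(0\le\bar x_j\le1-w_j\) are preserved under averaging. For the loads, \(\E_F[\bar x_j(w_j)]=\E_{F\otimes P}[x_j(w_j,\xi)]\le q_j\) by Tonelli, and the expected loads are equal, not merely bounded.

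Third, for each fixed \(\bw\), concavity of the capped Leontief utility, as a minimum of affine functions in \(\bx\), gives
\[u(\bw,\bar x(\bw))\ge \E_\xi\bigl[u(\bw,x(\bw,\xi))\bigr],\]
where \(\bar x(\bw)=(\bar x_1(w_1),\ldots,\bar x_m(w_m))\). The crucial point is that \(\bar x\) is still a product rule: each coordinate depends only on \(w_j\), because averaging over a seed independent of \(\bw\) does not introduce dependence on other coordinates. Integrating over \(F\) then yields \(\USW(\bar x)\ge\E_\xi\USW(x(\cdot,\xi))\), which is the expected welfare of the randomized rule.

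The main obstacle is the independence requirement. If the seed could be correlated with \(\bw\), for instance a shared signal that depends on other agencies' coordinates, then the conditional expectation \(\E[x_j\mid w_j]\) would still be a function of \(w_j\) alone. Jensen would still apply pointwise only to \(\E[x\mid\bw]\), however, and that vector need not have product form. The statement's restriction to type-independent randomness is exactly what makes \(\E[x_j\mid\bw]=\E[x_j\mid w_j]\) hold, so this step should be stated explicitly, along with a brief measurability remark.
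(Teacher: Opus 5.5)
Your argument is correct and is essentially the paper's proof: you average each coordinate rule over the type-independent seed, note that the pointwise bounds and the expected loads survive by linearity (Fubini), and apply Jensen to the concave map $\bx\mapsto\min\{1,w_1+x_1,\ldots,w_m+x_m\}$ pointwise in $\bw$ before integrating over $F$. Delete the false start in your second paragraph (``welfare moves the wrong way''): Jensen compares $\bar x$ with $\E_\xi[\USW(x(\cdot,\xi))]$, which is exactly the randomized rule's expected welfare, so no comparison with a best seed is needed.
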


\subsection{One-round threshold protocols}

We state the randomization and feasibility semantics explicitly. Public primitives \((F,\bq)\) and all thresholds are fixed offline. For each individual, agency \(j\) sends a message determined only by \(w_j\). A clearing rule observes the joint message and a random seed \(R\) independent of \(\bw\), and draws one service level. The selected level is broadcast to the agencies. If level \(\mu\) is selected, agency \(j\) uses its own \(w_j\) to supply \((\mu-w_j)_+\). Feasibility is ex ante over both \(\bw\) and \(R\).

For one threshold \(\mu\in(0,1]\), agency \(j\) sends
\[a_j^\mu(w_j)=\ind\{w_j<\mu\},
\qquad
S^\mu(\bw)=\{j:w_j<\mu\},
\qquad
p_S^\mu=\Prob[S^\mu(\bw)=S].\]
The clearing rule admits a type with profile \(S\) with a probability depending only on \(S\). Conservative accounting charges \(\mu\), rather than the smaller actual top-up \(\mu-w_j\), to each flagged coordinate. The resulting profile LP is
\begin{align}
 R(\mu)=\max_{(z_S)}\quad
   &\mu\sum_{S\subseteq[m]}z_S
   \label{lp:one-objective}\\
 \text{s.t.}\quad
   &0\le z_S\le p_S^\mu
       &&(S\subseteq[m]),\label{lp:one-mass}\\
   &\mu\sum_{S\ni j}z_S\le q_j
       &&(j\in[m]).\label{lp:one-capacity}
\end{align}
Here \(z_S\) is admitted population mass, not a probability. Conditional admission probability \(z_S/p_S^\mu\) implements any feasible LP solution, and the actual load is at most the conservative charge. Define
\[\REFone=\sup_{0<\mu\le1}R(\mu).\]
The LP objective certifies welfare from the selected service levels; it does not count baseline utility on unserved types or baseline utility above a selected level. Actual welfare is therefore at least the LP objective, not necessarily equal to it. In particular, \(\REFone\) denotes the best such certificate, not the optimum under actual-cost accounting.

For an ordered ladder \(0<\mu_1<\cdots<\mu_r\le1\), set \(\mu_0=0\) and let agency \(j\) send the bucket
\[a_j^{\bmu}(w_j)=\max\{\ell\in\{0,\ldots,r\}:\mu_\ell\le w_j\},\]
where the maximum is \(0\) if \(w_j<\mu_1\). Thus the joint message reveals, for every level \(\ell\), exactly the referral set
\[S_\ell(a)=\{j:a_j<\ell\}=\{j:w_j<\mu_\ell\}.\]
Let \(p_a=\Prob[a^{\bmu}(\bw)=a]\). Variables \(z_{a\ell}\) assign mass from message profile \(a\) to service level \(\mu_\ell\):
\begin{align}
 R(\bmu)=\max_{(z_{a\ell})}\quad
   &\sum_a\sum_{\ell=1}^r\mu_\ell z_{a\ell}
   \label{lp:ladder-objective}\\
 \text{s.t.}\quad
   &z_{a\ell}\ge0,\qquad
     \sum_{\ell=1}^r z_{a\ell}\le p_a
       &&(a\in\{0,\ldots,r\}^m),\label{lp:ladder-mass}\\
   &\sum_a\sum_{\ell:j\in S_\ell(a)}
        \mu_\ell z_{a\ell}\le q_j
       &&(j\in[m]).\label{lp:ladder-capacity}
\end{align}
The corresponding kernel chooses level \(\ell\) with conditional probability \(z_{a\ell}/p_a\). This proves both implementability and ex-ante feasibility of the LP.

We count fixed-length uplink messages from agencies to the clearing rule, using the full prescribed bucket alphabet. One threshold uses one bit per agency. An \(r\)-level ladder has \(r+1\) local buckets and uses \(\lceil\log_2(r+1)\rceil\) bits per agency. These are worst-case message budgets, not instance-wise lower bounds for every agency: an agency whose coordinate is constant can communicate less if unused buckets are removed. Threshold descriptions, the offline kernel, and individual identifiers are public infrastructure and are not charged. If round-trip communication is counted, the clearing rule additionally broadcasts one symbol from the \(r+1\) service-level alphabet.

\begin{remark}[What the messages do and do not reveal]
The clearing rule observes the joint threshold or bucket profile, not the raw coordinates. This is an information restriction, not a formal privacy guarantee: no differential, cryptographic, or strategic privacy claim is made.
\end{remark}
\section{The price of local information}
\label{sec:local}

Two exact families show that local-product information can be much weaker than full information even when all agencies know \((F,\bq)\) and optimize the same objective. 

\begin{definition}[Balanced target--decoy family]\label{def:balanced} 
Fix \(m\ge2\) and \(q\in(0,1/(2m)]\).  Each agency has supply \(q\). For each \(i\in[m]\), type \(\bw^{(i)}\) has mass \(q\), coordinate \(w_i^{(i)}=0\), and all other coordinates equal to one.  The remaining mass \(1-mq\) is the all-zero decoy. 
\end{definition} 

Agency \(i\) sees the same local value on its unique target and on the decoy. Full information can distinguish them and spend resource \(i\) only on the target.

\begin{definition}[Scarce-complement family]\label{def:scarce}
Fix \(q\in(0,1/2)\) and \(q_{\min}\in(0,q]\).  There are three agencies with supplies \((q,q,q_{\min})\).  The target \((0,0,1)\) has mass \(q\), and the all-zero decoy has mass \(1-q\).
\end{definition}

\begin{theorem}[Exact local-information losses]\label{thm:local-loss} 
The following identities hold, also for randomized local-product rules with type-independent randomness. 
    \begin{enumerate} 
        \item On the balanced family, 
        \[\OPT=mq,\qquad\DEC=\frac{q}{1-(m-1)q},\qquad\frac{\OPT}{\DEC}=m\bigl(1-(m-1)q\bigr)\in\left[\frac{m+1}{2},m\right).\] 
        \item On the scarce-complement family, 
        \[\OPT=q,\qquad \DEC=q^2+\min\{q_{\min},q(1-q)\}.\] 
        In particular, if \(q_{\min}\le q^2\), then \(\OPT/\DEC=\Theta(1/q)=\Theta(1/\OPT)\). 
    \end{enumerate} 
\end{theorem}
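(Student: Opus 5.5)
The plan is to compute $\OPT$ and $\DEC$ exactly on each family. Both families take only finitely many values, and every coordinate lies in $\{0,1\}$, so a local-product rule is described by a handful of numbers. First, \Cref{lem:interim} lets me restrict to deterministic local-product rules, since a randomized rule is matched by a deterministic one with the same loads and weakly larger welfare. Second, I may set $x_j(1)=0$: a coordinate already at $1$ gains nothing from a top-up under capped utility, and zeroing it only lowers loads. So agency $j$'s rule reduces to a single number $a_j=x_j(0)\in[0,1]$. Each $\DEC$ value then becomes a small optimization over $(a_j)$. Each $\OPT$ value comes from an explicit centralized allocation plus a matching load-counting upper bound.

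\textbf{Balanced family.}
\begin{itemize}
\item \emph{$\OPT$ lower bound.} Give target $\bw^{(i)}$ a top-up of $1$ in coordinate $i$ and give the decoy nothing. Agency $i$'s load is $q$, so the allocation is feasible and its welfare is $mq$.
\item \emph{$\OPT$ upper bound.} The welfare of target $i$ is at most $X_i(\bw^{(i)})$. The decoy's welfare is at most $\min_j X_j(\bzero)\le \frac1m\sum_j X_j(\bzero)$. Summing, $\USW(X)\le \sum_j\bigl(qX_j(\bw^{(j)})+(1-mq)X_j(\bzero)\bigr)\le\sum_j q_j=mq$.
\item \emph{$\DEC$.} Agency $j$ sees $w_j=0$ on its target and on the decoy, so its load is $a_j(1-(m-1)q)\le q$. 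Hence $a_j\le a:=q/(1-(m-1)q)$. Welfare is $q\sum_j a_j+(1-mq)\min_j a_j\le (mq+1-mq)a=a$, with equality at $a_j\equiv a$.
\item \emph{Ratio.} $\OPT/\DEC=m(1-(m-1)q)$. This is $<m$ because $q>0$. It is $\ge m\cdot\frac{m+1}{2m}=\frac{m+1}{2}$ because $q\le 1/(2m)$.
\end{itemize}

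\textbf{Scarce-complement family.}
\begin{itemize}
\item \emph{$\OPT$.} Every type has $w_1=0$, so every type's welfare is at most $X_1(\bw)$. Hence $\USW(X)\le \E[X_1]\le q$. Serving the target fully in coordinates $1,2$ attains $q$.
\item \emph{$\DEC$ setup.} Agencies $1,2$ always see $0$, so each uses a constant $a_j$, and its load $a_j\le q$. Agency $3$ uses $c$ on $w_3=0$ (the decoy) and $0$ on $w_3=1$, with load constraint $c(1-q)\le q_{\min}$.
\item \emph{$\DEC$ value.} Welfare is $q\min(a_1,a_2)+(1-q)\min(a_1,a_2,c)$. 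This is nondecreasing in each variable, so it is maximized at $a_1=a_2=q$ and $c=\min\{1,q_{\min}/(1-q)\}$. That gives $q^2+\min\{q(1-q),\,q_{\min}\}$; the cap $c\le1$ is inactive since $q_{\min}\le q<1-q$ forces $q_{\min}/(1-q)<1$.
\item \emph{Asymptotics.} If $q_{\min}\le q^2$, then $q^2\le\DEC\le 2q^2$, so $\OPT/\DEC\in[1/(2q),1/q]$. This is $\Theta(1/q)=\Theta(1/\OPT)$.
\end{itemize}

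\textbf{Main obstacle.} The delicate step is the $\OPT$ upper bound for the balanced family, where the decoy could in principle absorb resources. The averaging $\min_j X_j\le\frac1m\sum_j X_j$, together with $1/m\le1$, is the charging trick I expect to close it. The remaining care point is the reduction to deterministic rules with $x_j(1)=0$, which must preserve feasibility; \Cref{lem:interim} and monotonicity of loads handle this.
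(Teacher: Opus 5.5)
Your proposal is correct and follows the paper's proof essentially step by step. It uses the same explicit centralized allocations and load-counting upper bounds for $\OPT$. Your averaging $\min_j X_j(\bzero)\le\frac1m\sum_j X_j(\bzero)$ is exactly the paper's ``count the decoy term $m$ times.'' It then uses the same reduction of each agency's rule to the single number $x_j(0)$ to compute $\DEC$, and the same appeal to \Cref{lem:interim} to cover randomized local-product rules.
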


The two examples separate dimensional dilution from scarcity.  The balanced loss grows with the number of complementary resources even under equal supplies.  The scarce-complement loss grows like the reciprocal welfare scale with only three agencies.   In both cases, the obstruction is the same: targets and decoys occupy a common local information cell.

\begin{remark}[Combined worst-case loss]
\label{rem:combined-local-loss}
Padding the scarce-complement family with agencies whose coordinates are identically one leaves both \(\OPT\) and \(\DEC\) unchanged. Combining the resulting \(\Omega(1/\OPT)\) construction with the balanced target--decoy family's \(\Omega(m)\) construction shows that, for every \(m\ge3\) and \(0<\OPT\le1/2\), there exists an instance satisfying
\[
    \frac{\OPT}{\DEC}
    =
    \Omega\!\left(\max\left\{m,\frac{1}{\OPT}\right\}\right).
\]
\end{remark}
\section{One-threshold referrals}
\label{sec:one-threshold}

The referral profile separates many of the local information cells that cause \Cref{thm:local-loss}. The central proof device is a coordinate-wise charging certificate.

Fix a full-information optimal allocation \(X^\star\), write \(u^\star=u^{X^\star}\), and define the half-utility random variable
\[U(\bw)=\frac{u^\star(\bw)}{2}\in[0,1/2].\]
Attainment of the optimum is justified at the start of \Cref{app:proofs}.

\begin{lemma}[Half-utility charging]
\label{lem:charging}
For every \(\mu\in(0,1]\), the one-threshold LP satisfies
\[R(\mu)\ge\mu\Prob[U\ge\mu].\]
\end{lemma}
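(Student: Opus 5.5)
The plan is to exhibit an explicit feasible point of the LP \eqref{lp:one-objective}--\eqref{lp:one-capacity} whose objective value equals \(\mu\Prob[U\ge\mu]\). Fix \(\mu\in(0,1]\) and set
\[
z_S=\Prob\bigl[S^\mu(\bw)=S,\;U(\bw)\ge\mu\bigr]\qquad(S\subseteq[m]),
\]
that is, the mass of types with referral profile \(S\) whose half-utility under the fixed optimum \(X^\star\) is at least \(\mu\). Measurability holds because \(u^\star\) is measurable. The mass constraint \eqref{lp:one-mass} is immediate, since \(0\le z_S\le p_S^\mu\). Summing over the disjoint profiles gives \(\sum_S z_S=\Prob[U\ge\mu]\), so the objective equals \(\mu\Prob[U\ge\mu]\). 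Everything therefore reduces to the capacity constraints \eqref{lp:one-capacity}.

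For capacity I would use a pointwise charging inequality. Fix \(j\) and a type \(\bw\) with \(U(\bw)\ge\mu\) and \(j\in S^\mu(\bw)\), so that \(w_j<\mu\). Since \(u^\star(\bw)\le w_j+X^\star_j(\bw)\), we get
\[
X^\star_j(\bw)\ge u^\star(\bw)-w_j=2U(\bw)-w_j\ge 2\mu-w_j>\mu .
\]
So on the event \(E_j=\{U\ge\mu,\ w_j<\mu\}\) we have \(\mu\,\ind_{E_j}\le X^\star_j\), and the same inequality holds trivially off \(E_j\) because \(X^\star_j\ge0\). Taking expectations and using feasibility \eqref{eq:feasibility} of \(X^\star\),
\[
\mu\sum_{S\ni j}z_S=\mu\Prob[E_j]\le\E_F[X^\star_j(\bw)]\le q_j .
\]
Hence \((z_S)\) is feasible, and \(R(\mu)\ge\mu\Prob[U\ge\mu]\).

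The only delicate point is that the optimum is actually attained, so that \(X^\star\) and \(U\) exist. The paper asserts this at the start of \Cref{app:proofs}. If attainment were in doubt, I would run the same argument with an \(\eps\)-optimal feasible \(X\). Its utility still satisfies \(u^X\le w_j+X_j\) pointwise, which is all the argument uses. The factor \(1/2\) is what creates the slack \(2\mu-w_j>\mu\). Without it, \(u^\star\ge\mu\) would only give \(X^\star_j\ge\mu-w_j\), which cannot pay the conservative charge \(\mu\). This explains why the half-utility variable is the right one.
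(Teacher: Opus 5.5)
Your proposal is correct and matches the paper's proof essentially verbatim. You use the same feasible point \(z_S=\Prob[S^\mu(\bw)=S,\ U(\bw)\ge\mu]\), the same pointwise bound \(X_j^\star\ge u^\star-w_j\ge 2\mu-w_j>\mu\) on the flagged high-\(U\) event, and the same expectation step against \(q_j\). Your side remark also holds: the argument works for any feasible \(X\), so attainment is not actually needed.
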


Indeed, assign to each referral profile exactly its mass of types satisfying \(U\ge\mu\). Whenever coordinate \(j\) is flagged, the optimum must already spend more than \(\mu\) on that coordinate:
\[U\ge\mu,\quad w_j<\mu\quad\Longrightarrow\quad X_j^\star\ge u^\star-w_j\ge2\mu-w_j>\mu.\]
The optimum's capacity therefore pays for every conservative LP charge.

\begin{theorem}[One-bit logarithmic recovery]
\label{thm:one-bit}
For every instance with \(0<\OPT\le1\), some common threshold \(\mu\in[\OPT/4,1/2]\) implements a one-bit-per-agency protocol satisfying
\[\REFone\ge\frac{\OPT}{4\bigl(1+\ln(2/\OPT)\bigr)}.\]
\end{theorem}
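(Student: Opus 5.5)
The plan is to combine \Cref{lem:charging} with a standard ``rectangle under a survival curve'' bound. Write \(G(\mu)=\Prob[U\ge\mu]\) and set \(V=\sup_{\mu\in[\OPT/4,1/2]}\mu G(\mu)\). By \Cref{lem:charging}, \(\REFone\ge V\), and the supremum is attained by a threshold in \([\OPT/4,1/2]\). It therefore suffices to show \(V\ge \OPT/(4(1+\ln(2/\OPT)))\). Since \(\E[U]=\OPT/2\) and \(U\in[0,1/2]\), I will work entirely with the layer-cake formula
\[
\frac{\OPT}{2}=\E[U]=\int_0^{1/2}G(t)\,dt .
\]

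The first step is truncation. Let \(a=\OPT/4\). The contribution of \(t\in[0,a]\) to the integral is at most \(a=\OPT/4\), so
\[
\int_a^{1/2}G(t)\,dt\ge \frac{\OPT}{4}.
\]

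The second step is the rectangle bound. For each \(t\in[a,1/2]\) we have \(G(t)\le V/t\), hence
\[
\int_a^{1/2}G(t)\,dt\le V\ln\frac{1/2}{a}=V\ln\frac{2}{\OPT}.
\]
Combining the two steps gives \(V\ge \OPT/(4\ln(2/\OPT))\). This is already at least the claimed \(\OPT/(4(1+\ln(2/\OPT)))\) whenever \(\ln(2/\OPT)>0\), which holds because \(\OPT\le1\). The extra ``\(1+\)'' in the statement gives slack. I would keep it to absorb boundary issues: the left-continuity of \(G\) at the endpoint \(a\), and the degenerate case \(\OPT\) near \(2\), which cannot occur here. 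Alternatively, the unit term can come from bounding \(\int_0^a G\le aG(0)\) together with \(G\le1\) more carefully, in the style of the equal-revenue \(1+\ln\) computation.

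The third step is to confirm implementability. A threshold \(\mu\) attaining (or nearly attaining) \(V\) lies in \([\OPT/4,1/2]\subseteq(0,1]\). The profile LP solution at that \(\mu\), implemented through conditional admission probabilities \(z_S/p_S^\mu\) as in \Cref{sec:model}, is a one-bit-per-agency protocol whose welfare is at least its LP objective. If the supremum is not attained, I would use \(\mu\mapsto \mu G(\mu)\): \(G\) is nonincreasing and left-continuous, so \(\mu G(\mu)\) is upper semicontinuous from the left on a compact interval, and a maximizer exists. Alternatively I would accept an \(\eps\)-approximate maximizer and let \(\eps\to0\).

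The main obstacle is essentially bookkeeping rather than mathematics. One issue is attainment of \(X^\star\), which \Cref{app:proofs} supplies. The other is the endpoint and sup-attainment details just described. The analytic core is the one-line harmonic bound \(G(t)\le V/t\) integrated over \([\OPT/4,1/2]\).
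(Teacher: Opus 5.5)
Your proof is correct and is essentially the paper's argument: truncate at $\OPT/4$, use \Cref{lem:charging} to get $\REFone\ge A:=\sup_{\mu\in[\OPT/4,1/2]}\mu\Prob[U\ge\mu]$, and integrate the harmonic bound $\Prob[U\ge t]\le A/t$ over $[\OPT/4,1/2]$. The one difference is what you retain after truncating. You keep $\int_{\OPT/4}^{1/2}\Prob[U\ge t]\,dt=\E[(U-\OPT/4)_+]\ge\OPT/4$, whereas the paper keeps $\E[U\ind\{U\ge\OPT/4\}]$, whose tail expansion carries the extra rectangle $\tfrac{\OPT}{4}\Prob[U\ge\OPT/4]\le A$; that rectangle, not any boundary issue, is where the ``$1+$'' comes from. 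So your bound $\OPT/(4\ln(2/\OPT))$ is genuinely sharper, and because it strictly exceeds the stated bound, any near-maximizing threshold already suffices and attainment of the supremum is not needed (in any case $\mu\mapsto\Prob[U\ge\mu]$ is upper semicontinuous on both sides, being nonincreasing and left-continuous).
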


The logarithm is the cost of replacing the area under a survival curve by its best anchored rectangle. This is not an artifact of the proof.

\begin{definition}[Equal-revenue referral family]
\label{def:equal-revenue}
Fix \(\eps\in(0,1)\). There are two agencies, types have the form \(\bw=(s,0)\), and
\[\Prob[s\ge t]=
 \begin{cases}
   1, & 0\le t\le\eps,\\
   \eps/t, & \eps<t\le1,\\
   0, & t>1.
 \end{cases}
 \label{eq:equal-revenue-tail}\]
Equivalently, \(s\) has density \(\eps/t^2\) on \([\eps,1)\) and an atom of mass \(\eps\) at one. Supplies are
\[q_1=0,
\qquad
q_2=\E[s]=\eps\bigl(1+\ln(1/\eps)\bigr).\]
\end{definition}

\begin{theorem}[Exact one-threshold obstruction]
\label{thm:one-threshold-lower}
On the equal-revenue family,
\[\OPT=\eps\bigl(1+\ln(1/\eps)\bigr),
  \qquad
  \sup_{\textnormal{single-threshold protocols}}\USW=\eps.\]
Thus the best single-threshold protocol has exact approximation ratio \(1+\ln(1/\eps)\). The upper bound holds under actual top-up accounting, and therefore also under conservative accounting.
\end{theorem}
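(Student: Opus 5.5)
We compute the two sides of the statement separately.

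\emph{The value of $\OPT$.} For the upper bound, note that $q_1=0$ forces $X_1\equiv0$ almost surely. Hence $u^X(\bw)\le\min\{1,s\}=s$ for every feasible $X$, and $\OPT\le\E[s]$. For the matching construction, take $X_2(\bw)=s$, so that every type reaches $\min\{s,s\}=s$. The load on agency two is $\E[s]=q_2$, so the capacity binds exactly. The value of $\E[s]$ comes from integrating the tail:
\[\int_0^1\Prob[s\ge t]\,dt=\eps+\int_\eps^1\eps/t\,dt=\eps\bigl(1+\ln(1/\eps)\bigr).\]

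\emph{Upper bound for a single threshold, under actual top-up accounting.} Fix $\mu$ and any admission kernel. A served type receives level $\mu$, so agency one must supply $(\mu-s)_+$. Since $q_1=0$, this forces almost surely that every admitted type has $s\ge\mu$; otherwise a positive mass would receive a positive top-up from agency one. An unserved type gets utility $\min\{s,0\}=0$, because its second coordinate is $0$. A served type with $s\ge\mu$ gets exactly $\mu$: agency two tops up to $\mu$, and $\min(s,\mu)=\mu$. Welfare is therefore at most
\[\mu\Prob[s\ge\mu]\le\mu\cdot\min\{1,\eps/\mu\}\le\eps.\]
One point needs care: "served" here means the clearing rule selects level $\mu$ for that individual. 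I read the protocol as giving no top-up to unserved types, so they retain their baseline utility of $0$. The profile seen by the clearing rule is $(\ind\{s<\mu\},1)$, so the rule can distinguish $s\ge\mu$ from $s<\mu$. This observation is only needed for attainment.

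\emph{Attainment.} Take $\mu=\eps$ and admit the profile $(0,1)$, which has mass $\Prob[s\ge\eps]=1$, i.e.\ all types. The load on agency one is $0$. Under conservative accounting the load on agency two is $\eps\le q_2$. Welfare is $\eps$. Equivalently, $R(\eps)\ge\eps$ in the LP sense, and actual welfare is at least this certificate. The ratio $1+\ln(1/\eps)$ then follows by dividing the two values.

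The main obstacle is making the "forbidden spending" step rigorous for randomized kernels. Feasibility of agency one is ex ante, so I will argue as follows. $\E[(\mu-s)_+\ind\{\text{served}\}]\le0$ with a nonnegative integrand forces the event $\{\text{served},\,s<\mu\}$ to be null. From there the welfare bound holds pointwise almost surely.
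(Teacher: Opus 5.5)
Your proof is correct and follows essentially the same route as the paper: $\OPT=\E[s]$ via $X_1=0$ almost surely plus the tail integral, then the observation that $q_1=0$ makes admission of any type with $s<\mu$ a null event, so welfare is at most $\mu\Prob[s\ge\mu]\le\eps$, with attainment by admitting everyone at $\mu=\eps$. Your bound $\mu\min\{1,\eps/\mu\}$ simply merges the paper's two cases $\mu\le\eps$ and $\mu>\eps$. Your null-event argument for randomized kernels spells out what the paper states as ``zero admission probability.''
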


The zero first-coordinate supply makes the obstruction transparent. A type with \(s<\mu\) cannot be raised to \(\mu\), while types with \(s\ge\mu\) have total rectangle value \(\mu\Prob[s\ge\mu]\le\eps\). Full information instead matches the second-coordinate top-up to each realized \(s\).
\section{Ordered threshold ladders}
\label{sec:ladders}

A ladder replaces one rectangle by a lower staircase. For \(0<\mu_1<\cdots<\mu_r\le 1\), define
\[g_{\bmu}(u)=\max\bigl(\{\mu_\ell:\mu_\ell\le u\}\cup\{0\}\bigr).\]

\begin{lemma}[Staircase certificate]
\label{lem:staircase}
For every ordered ladder,
\[R(\bmu)\ge\E[g_{\bmu}(U)].\]
\end{lemma}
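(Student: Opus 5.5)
We exhibit an explicit feasible solution of the ladder LP and check that its objective equals \(\E[g_{\bmu}(U)]\). This extends the half-utility charging argument of \Cref{lem:charging} from one level to many. For each type \(\bw\), define its \emph{assigned level}
\[\ell(\bw)=\max\bigl(\{\ell\in\{1,\ldots,r\}:\mu_\ell\le U(\bw)\}\cup\{0\}\bigr),\]
so that \(g_{\bmu}(U(\bw))=\mu_{\ell(\bw)}\), with the convention \(\mu_0=0\). Level \(0\) means the type is not served.

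The LP variables are then set by pooling types according to message profile and assigned level:
\[z_{a\ell}=\Prob\bigl[a^{\bmu}(\bw)=a,\ \ell(\bw)=\ell\bigr]\qquad(\ell\ge1).\]
The mass constraint \eqref{lp:ladder-mass} is immediate, because the events \(\{\ell(\bw)=\ell\}\) are disjoint within each profile \(a\). The objective equals
\[\sum_{\ell\ge1}\mu_\ell\Prob[\ell(\bw)=\ell]=\E[g_{\bmu}(U)],\]
since unassigned types contribute zero.

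The main step is the capacity constraint \eqref{lp:ladder-capacity}. Its left side for agency \(j\) equals
\[\E\bigl[\mu_{\ell(\bw)}\ind\{j\in S_{\ell(\bw)}(a^{\bmu}(\bw)),\ \ell(\bw)\ge1\}\bigr].\]
By the definition of the buckets, \(j\in S_{\ell}(a)\) holds exactly when \(w_j<\mu_\ell\). So the term is charged only for types with \(\mu_{\ell(\bw)}\le U(\bw)\) and \(w_j<\mu_{\ell(\bw)}\). For such types we apply the pointwise implication from \Cref{lem:charging} with \(\mu=\mu_{\ell(\bw)}\):
\[X^\star_j(\bw)\ge u^\star(\bw)-w_j\ge 2\mu_{\ell(\bw)}-w_j>\mu_{\ell(\bw)}.\]
The first inequality holds because \(u^\star\le w_j+X^\star_j\). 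Hence the integrand is pointwise at most \(X^\star_j(\bw)\), and taking expectations bounds the left side by \(\E[X^\star_j]\le q_j\) using feasibility \eqref{eq:feasibility} of \(X^\star\).

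The only subtlety is measurability. The assigned level is a measurable function of \(U\), and the profile is a measurable function of \(\bw\), so every \(z_{a\ell}\) is a well-defined probability. The charge for a type depends on its own level rather than on a fixed \(\mu\), but the charging inequality is pointwise, so this causes no difficulty. With feasibility established, \(R(\bmu)\ge\E[g_{\bmu}(U)]\) follows.
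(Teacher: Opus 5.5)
Your proposal is correct and follows the paper's proof essentially step for step. You assign each type to the largest ladder level at most $U(\bw)$, pool by joint bucket profile to define $z_{a\ell}$, and discharge each capacity constraint with the pointwise half-utility bound $X_j^\star>\mu_{\ell(\bw)}$ whenever $w_j<\mu_{\ell(\bw)}\le U(\bw)$.
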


The certificate assigns each type to the largest ladder level below its half-utility. Each type is assigned once. If agency \(j\) is flagged at that level, the same pointwise inequality as in \Cref{lem:charging} charges the conservative cost to \(X_j^\star\). Since \(U\le 1/2\), levels above \(1/2\) are never used by this certificate. Such levels are permitted in an ordered ladder and cannot reduce welfare, since the clearing rule may assign them zero probability.

\begin{theorem}[Constant recovery with a geometric ladder]
\label{thm:geometric-ladder}
For an instance with \(0<\OPT\le1\), let \(\delta=\OPT/4\),
\[
  L=\left\lceil\log_2\!\left(\frac{2}{\OPT}\right)\right\rceil,
  \qquad
  r=L+1,
  \qquad
  \mu_\ell=\min\{2^{\ell-1}\delta,1/2\}
  \quad(\ell=1,\ldots,r).
\]
These levels are strictly increasing and the resulting protocol satisfies
\[R(\bmu)\ge\frac{\OPT}{8}.\]
It uses
\[b=\left\lceil\log_2(r+1)\right\rceil=O\bigl(\log\log(1/\OPT)\bigr)\]
uplink bits per agency as \(\OPT\downarrow0\).
\end{theorem}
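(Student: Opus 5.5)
The plan is to apply \Cref{lem:staircase} to this particular ladder and then bound $\E[g_{\bmu}(U)]$ from below by two elementary steps: truncate at $\delta$, then round down geometrically.

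\emph{Strict increase.} Since $\OPT\le1$, we have $\delta\le1/4<1/2$. The index $L$ is the least integer with $2^{L}\ge 2/\OPT$, equivalently $2^{L-1}\delta\ge1/2$. Hence $2^{\ell-1}\delta<1/2$ for every $\ell\le L-1$, because $L-1$ does not satisfy the defining inequality. So $\mu_1<\cdots<\mu_{L-1}$ are the pure dyadic values, all below $1/2$. At $\ell=L$ the minimum switches to $1/2$, which is at least $\mu_{L-1}$; the inequality is strict whenever $2^{L-2}\delta<1/2$, which holds.

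The top level $\mu_{r}=\mu_{L+1}$ would then repeat $1/2$. I would handle this by noting that \Cref{lem:staircase} only needs the level set, so a duplicate can be dropped, or by checking the indexing convention more carefully. This bookkeeping is the only delicate point. If the levels really do coincide at the top, I would argue as the paper does: an extra level is harmless, and the message count $r+1$ is only an upper bound.

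\emph{Pointwise staircase bound.} I claim that $g_{\bmu}(u)\ge u/2$ for every $u\in[\delta,1/2]$. Let $k$ be the largest index with $\mu_k\le u$; it exists because $\mu_1=\delta\le u$. There are two cases.
\begin{itemize}
\item If $\mu_k=1/2$, then $u=1/2$, and $g_{\bmu}(u)=u\ge u/2$.
\item Otherwise $\mu_{k+1}>u$ and $\mu_{k+1}\le 2\mu_k$, by the dyadic structure together with capping at $1/2$. This gives $g_{\bmu}(u)=\mu_k\ge u/2$.
\end{itemize}
For $u<\delta$ we only use $g_{\bmu}\ge0$.

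\emph{Truncation.} Since $U\le 1/2$ everywhere, we get
\[
\E[g_{\bmu}(U)]\ge \tfrac12\E\bigl[U\,\ind\{U\ge\delta\}\bigr]\ge \tfrac12\bigl(\E U-\delta\bigr).
\]
Here $\E U=\OPT/2$ and $\delta=\OPT/4$, so the right-hand side equals $\OPT/8$. Combined with \Cref{lem:staircase}, this gives $R(\bmu)\ge\OPT/8$.

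\emph{Bits.} The ladder has $r+1$ buckets, so $b=\lceil\log_2(r+1)\rceil$ bits per agency suffice. Since $r=L+1=\log_2(1/\OPT)+O(1)$, we get $b=O(\log\log(1/\OPT))$ as $\OPT\downarrow0$.

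The main obstacle is not analytic but lies in the index bookkeeping at the top of the ladder: showing strict increase and verifying the ratio-2 gap when the cap $1/2$ is reached. Everything else is routine.
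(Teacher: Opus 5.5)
Your welfare bound and bit count are correct and follow the paper's proof: apply \Cref{lem:staircase}, show $g_{\bmu}(u)\ge u/2$ on $[\delta,1/2]$, and lose at most $\delta=\OPT/4$ to truncation from $\E U=\OPT/2$. The problem is the strict-monotonicity clause. An off-by-one slip means you do not prove it; in fact you assert its negation.

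Your description of $L$ as the least integer with $2^{L}\ge 2/\OPT$ is right. The ``equivalently'' step is wrong. Multiplying by $\delta=\OPT/4$ turns $2^{L}\ge 2/\OPT$ into $2^{L}\delta\ge 1/2$, not $2^{L-1}\delta\ge 1/2$. Since $2^{L}\delta$ is the uncapped value of $\mu_{L+1}$, the cap is first reached at the top index, not at $\mu_L$. Minimality of the ceiling gives $2^{L-1}<2/\OPT$, i.e.\ $\mu_L=2^{L-1}\delta<1/2$, while $2^{L}\ge 2/\OPT$ gives $\mu_{L+1}=1/2$. Hence
\[
\mu_1<\mu_2<\cdots<\mu_L<\tfrac12=\mu_{L+1}=\mu_r ,
\]
with exactly one level at the cap and no repetition. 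For example, $\OPT=1$ gives $(1/4,1/2)$ and $\OPT=1/2$ gives $(1/8,1/4,1/2)$.

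Your claim that $\mu_L=\mu_{L+1}=1/2$ is therefore false. The ``drop a duplicate'' fallback is not needed. It would also change $r$ and the bucket count, and the model requires an ordered ladder to satisfy $0<\mu_1<\cdots<\mu_r\le 1$.

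The correct computation also supplies what your second case uses implicitly. Because $\mu_r=1/2$, any $k$ with $\mu_k<1/2$ has $k<r$, so $\mu_{k+1}$ exists. Moreover $\mu_k=2^{k-1}\delta$ is then uncapped, so $\mu_{k+1}\le 2^{k}\delta=2\mu_k$.
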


The construction retains at least \(\OPT/4\) of \(\E U=\OPT/2\) above \(\delta\). Doubling ensures \(g_{\bmu}(u)\ge u/2\) throughout that retained range. (Its top level is $1/2$, so no level exceeds the range of $U$.)

The bit order is necessary within ordered-threshold protocols.

\begin{theorem}[Ordered-threshold lower bound]
\label{thm:ladder-lower}
On the equal-revenue family of \Cref{def:equal-revenue}, every feasible ordered protocol with \(r\) positive service levels has welfare at most
\[r\eps.\]
Consequently, any ordered-threshold protocol that obtains a fixed positive fraction of \(\OPT\) on every instance requires
\[r=\Omega\bigl(\log(1/\OPT)\bigr)
   \quad\text{levels and}\quad
   b=\Omega\bigl(\log\log(1/\OPT)\bigr)
   \quad\text{uplink bits per agency}\]
as \(\OPT\downarrow0\).
\end{theorem}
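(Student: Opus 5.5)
The plan is to bound, level by level, the welfare an ordered protocol can deliver on the equal-revenue family. Then, since $\OPT=\eps(1+\ln(1/\eps))$ there, I convert the bound into a lower bound on $r$ and hence on $b$.

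\textbf{Step 1: welfare is bounded by $\sum_\ell \mu_\ell\Prob[\text{served at }\mu_\ell]$.} Consider any feasible ordered protocol with levels $\mu_1<\cdots<\mu_r$. Because $q_1=0$, agency $1$ can spend nothing. I will use actual top-up accounting, which gives the strongest upper bound and therefore also covers conservative accounting. If a type with $s<\mu_\ell$ is served at $\mu_\ell$, it would need positive spending in coordinate one. Ex-ante feasibility with $q_1=0$ forces that event to have probability zero, so up to null sets only types with $s\ge\mu_\ell$ are assigned level $\ell$. The second coordinate is always $0$, so an unserved type has utility $\min\{1,s,0\}=0$. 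A type served at level $\mu_\ell$ has utility $\min\{1,s,\mu_\ell\}=\mu_\ell$, since the second coordinate is raised exactly to $\mu_\ell$. Welfare is therefore $\sum_\ell \mu_\ell\,\Prob[\text{served at }\ell]$.

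\textbf{Step 2: each level contributes at most $\eps$.} By Step 1, $\Prob[\text{served at }\ell]\le\Prob[s\ge\mu_\ell]$. The equal-revenue tail gives $\mu\Prob[s\ge\mu]\le\eps$ for every $\mu\in(0,1]$: it equals $\mu\le\eps$ when $\mu\le\eps$, and equals $\eps$ otherwise. Summing over the $r$ levels gives welfare at most $r\eps$. This is the same rectangle argument as in \Cref{thm:one-threshold-lower}, applied once per level. It does not matter that the clearing rule can condition on buckets: whatever randomization it uses, the per-level contribution is dominated by the rectangle.

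\textbf{Step 3: derive the bounds on $r$ and $b$.} Suppose a protocol class guarantees welfare at least $c\,\OPT$ on every instance, for some fixed $c>0$. On this family that means $r\eps\ge c\,\eps(1+\ln(1/\eps))$, so $r\ge c(1+\ln(1/\eps))$. Since $\OPT=\eps(1+\ln(1/\eps))\ge\eps$, we have $\ln(1/\eps)\ge\ln(1/\OPT)$, and therefore $r=\Omega(\log(1/\OPT))$. As $\eps\downarrow0$ we also have $\OPT\downarrow0$, which gives the asymptotic statement. Finally, $r$ levels require $r+1$ buckets, so $b=\lceil\log_2(r+1)\rceil=\Omega(\log\log(1/\OPT))$.

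The main obstacle is making Step 1 rigorous for arbitrary randomized kernels and arbitrary ordered protocols, in particular excluding service of types with $s<\mu_\ell$. I would argue as follows. The ex-ante load on coordinate one is at least $\E[\sum_\ell(\mu_\ell-s)_+\ind\{\text{served at }\ell\}]$, and this expectation must be $0$. Hence $\ind\{\text{served at }\ell\}(\mu_\ell-s)_+=0$ almost surely, which is exactly what Step 1 needs.
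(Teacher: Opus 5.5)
Your proof is correct and is essentially the paper's argument: \(q_1=0\) forces every type \((s,0)\) to be served only at levels \(\mu_\ell\le s\) almost surely, and each level is then charged at most the equal-revenue rectangle \(\mu_\ell\Prob[s\ge\mu_\ell]\le\eps\). Summing gives \(r\eps\), and comparing with \(\OPT\) gives \(r\ge c(1+\ln(1/\eps))\ge c\ln(1/\OPT)\).

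The only difference is cosmetic. You sum rectangles over the selected level, whereas the paper expands the staircase \(\E g_{\bmu}(s)=\sum_\ell(\mu_\ell-\mu_{\ell-1})\Prob[s\ge\mu_\ell]\). Your version uses the uniform bound \(\mu\Prob[s\ge\mu]\le\eps\) on \((0,1]\) and so avoids the paper's separate telescoping of levels below \(\eps\). The cost is that you lose the sharper intermediate quantity \(\E g_{\bmu}(s)\), which the paper reuses in \Cref{cor:exact-ladder}.
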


Combining \Cref{thm:geometric-ladder,thm:ladder-lower} gives a tight \(\Theta(\log\log(1/\OPT))\) per-agency bit frontier for constant-factor recovery within the ordered-threshold class.

The same example admits a sharper, finite-level characterization.

\begin{corollary}[Exact ladder value on the equal-revenue family]
\label{cor:exact-ladder}
Let \(V_r(\eps)\) be the largest welfare attainable on \Cref{def:equal-revenue} by an ordered protocol with at most \(r\) positive service levels in \((0,1]\). Under either actual or conservative top-up accounting, \(V_1(\eps)=\eps\), and, for \(r\ge2\),
\[V_r(\eps)=\eps\left[1+(r-1)\left(1-\eps^{1/(r-1)}\right)\right].\]
An optimal ladder is \(\mu_\ell=\eps^{(r-\ell)/(r-1)}\), \(\ell=1,\ldots,r\).
\end{corollary}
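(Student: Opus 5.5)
The plan is to reduce the corollary to a one-dimensional optimization over ladders, namely maximizing \(\E[g_{\bmu}(s)]\), and then to solve that optimization exactly with an AM--GM argument.

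\emph{Step 1 (reduction, upper bound).} Fix an ordered protocol with levels \(0<\mu_1<\cdots<\mu_r\le1\) and any clearing kernel, possibly randomized. Since \(q_1=0\), feasibility forces \(X_1=0\) almost surely over \((\bw,R)\). The selected level \(\mu_\ell\) requires top-up \((\mu_\ell-s)_+\) in coordinate one, so a type can be served at \(\mu_\ell\) with positive probability only if \(s\ge\mu_\ell\), up to null sets. Conservative accounting charges \(\mu_\ell>0\) to a flagged coordinate one, so it is only more restrictive.

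Next I use what the clearing rule sees. It sees the bucket \(k\) of \(s\), meaning \(\mu_k\le s<\mu_{k+1}\). It cannot distinguish types inside a bucket, and it cannot serve a type above the bucket's level. Two further facts pin down the welfare. First, an unserved type has utility \(\min\{1,s,0\}=0\). Second, a type served at \(\mu_\ell\) has utility exactly \(\mu_\ell\), because \(w_2+X_2=\mu_\ell\le s\). Hence the welfare is at most \(\E[g_{\bmu}(s)]\).

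\emph{Step 2 (achievability).} Conversely, serve every type in bucket \(k\ge1\) at \(\mu_k\), and serve bucket \(0\) not at all. This uses zero of resource one. Under either accounting it uses \(\E[g_{\bmu}(s)]\le\E s=q_2\) of resource two, since coordinate two is always flagged at cost \(\mu_\ell\). So it is feasible, and
\[V_r(\eps)=\sup_{\bmu}\E[g_{\bmu}(s)],\qquad \E[g_{\bmu}(s)]=\sum_{\ell=1}^r(\mu_\ell-\mu_{\ell-1})\Prob[s\ge\mu_\ell].\]
Allowing at most \(r\) levels is harmless: adding a level can be neutralized by giving it zero probability, so \(V_r\) is monotone in \(r\).

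\emph{Step 3 (optimization).} Any level below \(\eps\) can be raised to \(\eps\) without decreasing the staircase value, because \(\Prob[s\ge t]=1\) on \([0,\eps]\). Levels that collide after this can be dropped. So assume \(\eps\le\mu_1<\cdots<\mu_r\le1\). Using \(\Prob[s\ge t]=\eps/t\), which also holds at \(t=1\) thanks to the atom of mass \(\eps\), the value becomes
\[\eps\Bigl[1+\sum_{\ell=2}^r\bigl(1-\rho_\ell\bigr)\Bigr],\qquad \rho_\ell=\frac{\mu_{\ell-1}}{\mu_\ell},\qquad \prod_{\ell=2}^r\rho_\ell=\frac{\mu_1}{\mu_r}\ge\eps.\]
For \(r=1\) this gives exactly \(\eps\), which matches \Cref{thm:one-threshold-lower}. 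For \(r\ge2\), AM--GM gives
\[\sum_{\ell=2}^r\rho_\ell\ \ge\ (r-1)\eps^{1/(r-1)}.\]
Equality holds when all ratios equal \(\eps^{1/(r-1)}\), with \(\mu_1=\eps\) and \(\mu_r=1\), which is exactly \(\mu_\ell=\eps^{(r-\ell)/(r-1)}\). Substituting the bound yields the stated formula for \(V_r(\eps)\).

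The main obstacle is making Step 1 rigorous. I need to show that randomized kernels, ex-ante feasibility over \(R\), and the bucket-coarsening cannot do better than the pointwise staircase. The key points are that \(q_1=0\) kills every service at a level above \(s\) on a positive-measure set, and that \(u\) equals the served level. Once those hold, welfare integrates type-by-type to at most \(g_{\bmu}(s)\).
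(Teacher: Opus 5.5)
Your proposal is correct and follows essentially the same route as the paper. You reduce to $\E[g_{\bmu}(s)]$ using $q_1=0$ together with the bucket-measurable largest-level rule, discard levels below $\eps$, write the value as $\eps[r-\sum_\ell \mu_{\ell-1}/\mu_\ell]$, and finish with AM--GM. The only cosmetic difference is that you apply AM--GM directly with $\prod_\ell\rho_\ell=\mu_1/\mu_r\ge\eps$, whereas the paper first moves $\mu_1$ to $\eps$ and $\mu_r$ to one.
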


\begin{proof}
For a fixed ladder, feasibility forces the selected level to be at most \(s\). Conversely, assigning the largest level at most \(s\) is a bucket-measurable rule. It uses no resource one and spends \(g_{\bmu}(s)\le s\) in resource two, so is feasible under either accounting convention. The optimal value for that ladder is therefore \(\E g_{\bmu}(s)\).

Only the largest level below \(\eps\) can ever be selected; raising it to \(\eps\) weakly improves the value, after deleting any duplicate. Thus levels below \(\eps\) can be removed without loss. Adding levels never decreases welfare, so it suffices to optimize over exactly \(r\) distinct levels in \([\eps,1]\). The one-level case is \Cref{thm:one-threshold-lower}. For \(r\ge2\), the tail formula gives
\[\E g_{\bmu}(s)=\eps\left[r-\sum_{\ell=2}^{r}\frac{\mu_{\ell-1}}{\mu_\ell}\right].\]
Decreasing \(\mu_1\) to \(\eps\) and increasing \(\mu_r\) to one can only improve this expression. The product of the remaining ratios is then \(\eps\). By the arithmetic--geometric mean inequality their sum is at least \((r-1)\eps^{1/(r-1)}\), with equality at the displayed geometric ladder. Substitution proves the formula.
\end{proof}

As \(r\to\infty\), this expression increases to \(\eps(1+\ln(1/\eps))=\OPT\). The exact formula characterizes this family, not the best communication protocol on every instance.
\section{Computation, finite populations, and scope}
\label{sec:implementation}

For fixed thresholds, the online protocol consists of one message to a clearing rule and one returned service level.  The admission kernel is computed offline from \(F\), \(\bq\), and the thresholds. 

\begin{proposition}[Finite LP and stochastic implementation]\label{prop:implementation} 
For a fixed one-threshold instance, the LP has one variable for each positive-mass referral profile and \(m\) capacity constraints.  For a fixed \(r\)-level ladder, it has at most \(r\) variables per positive-mass joint bucket profile and \(m\) capacity constraints.  Any feasible solution induces a stochastic kernel whose expected coordinate loads satisfy the original supplies. 
\end{proposition}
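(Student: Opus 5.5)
The proof is a direct count of LP size plus the implementation argument already sketched in \Cref{sec:model}; we treat the one-threshold and ladder cases in parallel.

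\emph{Size of the one-threshold LP.} Fix \(\mu\). The referral profile \(S^\mu(\bw)\) takes values in the finite set \(2^{[m]}\). Any profile with \(p_S^\mu=0\) has its variable forced to zero by \eqref{lp:one-mass}, so it can be deleted. That leaves one variable \(z_S\) for each positive-mass profile. The box constraints \eqref{lp:one-mass} are simple bounds, and the only coupling constraints are the \(m\) capacity rows \eqref{lp:one-capacity}.

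\emph{Size of the ladder LP.} The bucket profile \(a\in\{0,\ldots,r\}^m\) is finite, and zero-mass profiles again force \(z_{a\ell}=0\). Each remaining profile carries at most \(r\) variables \(z_{a\ell}\) and one mass row \eqref{lp:ladder-mass}, and there are \(m\) capacity rows \eqref{lp:ladder-capacity}. The masses \(p_S^\mu\) and \(p_a\) are Borel probabilities of measurable rectangle events under \(F\), so every coefficient is well defined. The LP is therefore computable offline from \((F,\bq,\bmu)\).

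\emph{From a feasible solution to a kernel.} For each profile \(a\) with \(p_a>0\), the clearing rule uses the seed \(R\) to select level \(\ell\) with probability \(z_{a\ell}/p_a\), and selects no service with the remaining probability \(1-\sum_\ell z_{a\ell}/p_a\ge0\). This is a valid distribution by \eqref{lp:ladder-mass}; the one-threshold case is the special case \(r=1\). Since \(R\) is independent of \(\bw\), the joint probability that the type lies in profile \(a\) and level \(\ell\) is selected equals \(p_a\cdot z_{a\ell}/p_a=z_{a\ell}\). When level \(\ell\) is selected, agency \(j\) supplies \((\mu_\ell-w_j)_+\). This is zero unless \(w_j<\mu_\ell\), i.e.\ unless \(j\in S_\ell(a)\), and in that case it is at most \(\mu_\ell\). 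Taking expectations over \(\bw\) and \(R\),
\[
\E[X_j]\le\sum_a\sum_{\ell:j\in S_\ell(a)}\mu_\ell z_{a\ell}\le q_j .
\]
The pointwise bound \(X_j\le1-w_j\) holds because \(\mu_\ell\le1\).

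The only point needing care, and so the main obstacle, is that the conditioning on profiles is legitimate. Profiles are determined coordinatewise by measurable threshold maps, so the events involved are measurable. The independence of \(R\) from \(\bw\) is exactly what makes the joint mass factor as \(p_a\cdot(z_{a\ell}/p_a)\).
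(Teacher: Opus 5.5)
Your proposal is correct and follows the paper's proof essentially step for step. You delete zero-mass profiles to get the stated LP sizes, implement a feasible solution by selecting level $\ell$ with conditional probability $z_{a\ell}/p_a$ (no service otherwise), and bound the actual top-up $(\mu_\ell-w_j)_+$ by the conservative charge $\mu_\ell$ on flagged coordinates and by zero elsewhere. Treating the one-threshold LP as the $r=1$ ladder (the paper handles it separately) and checking the pointwise bound $X_j\le 1-w_j$ are harmless additions.
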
 

For an empirical distribution on \(n\) observed types, there are at most \(n\) positive-mass profiles.  The same LP can therefore be solved directly from profile counts.  Independent conditional admission implements its fractional solution ex ante.  If hard realized capacities are required, rounding each profile--level admitted count down and assigning disjoint individuals to the resulting counts preserves every conservative capacity constraint.  With welfare normalized per capita, the loss in the \emph{certified LP objective} is less than \(\mu_\ell/n\) for a rounded level-\(\ell\) decision, and hence at most \(D/n\) over \(D\) positive profile--level decisions.  This is not a bound on the loss of full actual welfare under arbitrary deterministic choices within a profile.  Uniform random disjoint assignment within each profile preserves the profile-only information restriction and gives the same bound in expectation relative to the original kernel's actual welfare: an individual's welfare gain from level \(\mu_\ell\) is between zero and \(\mu_\ell\).  Sharper dependent-rounding guarantees are separate from the information frontier. This support-sensitive observation concerns a fixed threshold or ladder; it does not assert polynomial dependence on \(m\), nor an end-to-end algorithm for optimizing over a continuum of thresholds.

Three boundaries matter when applying the theorems.

\begin{itemize} 
\item \textbf{Ex-ante resources.}  Capacities are expectations over the population and protocol randomness.  Ex-post inventory, queues, and arrivals need an online model. 
\item \textbf{Public planning distribution.}  The protocol may use \(F\) and \(\bq\) offline.  Estimating them from data introduces statistical error not analyzed here. 
\item \textbf{Non-strategic messages.}  Agencies report deterministic threshold comparisons of their observations.  Incentives, verification, multi-round communication, and formal privacy are outside the model. 
\end{itemize} 

Conservative accounting is essential only for the universal upper certificates: it replaces actual cost \((\mu-w_j)_+\) by \(\mu\). Every constructed protocol is therefore feasible under actual top-up costs. The equal-revenue lower bounds use \(q_1=0\).  Both accounting conventions then prohibit assigning a type \((s,0)\) to a level above \(s\), so the stated lower bounds remain valid under either convention even though their numerical charges need not coincide.
\section{Conclusion}
\label{sec:conclusion}

Complementarity turns local indistinguishability into a global welfare loss. With no communication, even an optimal distribution-aware product policy can lose \(\Theta(m)\) or \(\Theta(1/\OPT)\).  One common threshold repairs the worst obstruction up to the best possible logarithm for that class. Ordered thresholds then convert survival-curve rectangles into a staircase: \(\Theta(\log(1/\OPT))\) levels, encoded in \(\Theta(\log\log(1/\OPT))\) bits per agency, are sufficient and necessary for constant-factor recovery within ordered referrals.

The operative distinction is not local versus centralized control.  It is whether agencies share enough structured information to identify which local deficits belong to the same complementary bottleneck.
\newpage

{\fontsize{11pt}{13.2pt}\selectfont
\bibliographystyle{plainnat}
\bibliography{references}

\paragraph{Generative AI Use.}
Codex assisted with proof auditing and literature-search suggestions. Responsibility for the final manuscript rests with the author.

\clearpage}

\normalsize
\appendix
\section{Proofs}
\label{app:proofs}

\paragraph{Attainment of the full-information benchmark.}
The supremum defining \(\OPT\) is attained. To see this, introduce a utility variable \(v\) and work in \(L^2(F)^{m+1}\) with constraints \(0\le X_j\le1-w_j\), \(\E X_j\le q_j\), \(0\le v\le1\), and \(v\le w_j+X_j\) almost surely. This feasible set is non-empty, bounded, convex, and norm-closed, thus weakly compact. The continuous linear functional \(\E v\) attains its maximum there. Replacing \(v\) by the induced utility can't decrease its value and preserves feasibility. Representatives can be chosen measurable and modified on null-sets to satisfy the pointwise bounds. This justifies our use of \(X^\star\) throughout proofs.

\subsection{Interim reduction}

\begin{proof}[Proof of \Cref{lem:interim}]
Let \(R\) collect all private and shared randomization, independent of the type. A randomized local-product rule has \(X_j(\bw;R)=x_j(w_j;R)\). Define
\[\bar x_j(w_j)=\E_R[x_j(w_j;R)]\quad\text{and}\quad\bar X(\bw)=(\bar x_1(w_1),\ldots,\bar x_m(w_m)).\]
Linearity preserves expected loads:
\[\E_F[\bar x_j(w_j)]=\E_{F,R}[x_j(w_j;R)]\le q_j.\]
The no-waste bounds are preserved under expectation. For each fixed \(\bw\), the map
\[\bx\longmapsto\min\{1,w_1+x_1,\ldots,w_m+x_m\}\]
is concave. Jensen's inequality thus gives
\[u^{\bar X}(\bw)\ge\E_R[u^{X(\cdot;R)}(\bw)].\]
Taking expectation over \(F\) proves the claim.
\end{proof}

\subsection{Local-information losses}

\begin{proof}[Proof of \Cref{thm:local-loss}]
We treat the two families separately.

\paragraph{Balanced target-decoy family.}
Full information gives one unit of resource \(i\) if the type is \(\bw^{(i)}\); and $0$ otherwise. Expected load on each coordinate is \(q\), and welfare is \(mq\). Conversely, for any feasible allocation,
\[\begin{aligned}\USW(X)&=\sum_{i=1}^m q\,u^X(\bw^{(i)})+(1-mq)u^X(\bzero)\\
&\le\sum_{i=1}^m\left(qX_i(\bw^{(i)})+(1-mq)X_i(\bzero)\right)\\
&\le\sum_{i=1}^m q_i=mq.\end{aligned}\]
The first inequality uses \(u^X(\bw^{(i)})\le X_i(\bw^{(i)})\) and \(u^X(\bzero)\le X_i(\bzero)\) for every \(i\); counting the non-negative decoy term \(m\) times only enlarges the bound. The second inequality is coordinate-wise feasibility, as all other target types have baseline $1$ in coordinate \(i\) and as such get no useful top-up there. Thus \(\OPT=mq\).

Now consider a deterministic local-product rule. Agency \(j\) sees \(w_j=0\) on its target and on the decoy, a cell of mass
\[q+(1-mq)=1-(m-1)q.\]
If \(a_j=x_j(0)\), feasibility implies
\[a_j\le a:=\frac{q}{1-(m-1)q}.\]
Every target's utility is at most its missing-coordinate top-up, and the decoy's utility is at most \(\min_j a_j\); so every type has utility at most \(a\) and \(\DEC\le a\). Set \(x_j(0)=a\) and \(x_j(1)=0\), for all \(j\), to achieve equality. Therefore
\[\frac{\OPT}{\DEC}=m\bigl(1-(m-1)q\bigr).\]
Assumption \(q\le1/(2m)\) places this quantity in \([(m+1)/2,m)\).

\paragraph{Scarce-complement family.}
Full information tops-up coordinates \textit{one} and \textit{two} of the target to one. This uses expected load \(q\) in each coordinate and achieves welfare \(q\). For every feasible allocation,
\[\USW(X)\le\E[w_1+X_1]=\E[X_1]\le q,\]
so \(\OPT=q\).

Under a local-product rule, agencies \textit{one} and \textit{two} see $0$ on the entire population, so \(x_1(0),x_2(0)\le q\). Agency \textit{three} sees $0$ only on the decoy, so \(x_3(0)\le q_{\min}/(1-q)\). Let \(a=\min\{x_1(0),x_2(0)\}\le q\). The target and decoy contributions are bounded by
\[q a\le q^2\quad\text{and}\quad(1-q)\min\left\{q,\frac{q_{\min}}{1-q}\right\}=\min\{q(1-q),q_{\min}\},\]
respectively. This proves the upper bound on \(\DEC\). It is attained by
\[x_1(0)=x_2(0)=q,\qquad x_3(0)=\min\left\{q,\frac{q_{\min}}{1-q}\right\},\]
with all other top-ups $0$. The exact formula follows. If \(q_{\min}\le q^2\), then \(q^2\le\DEC\le2q^2\), which gives \(\OPT/\DEC=\Theta(1/q)=\Theta(1/\OPT)\).

Finally, \Cref{lem:interim} transfers both upper-bounds to randomized local-product rules (with type-independent randomness).
\end{proof}

\subsection{One-threshold recovery and obstruction}

\begin{proof}[Proof of \Cref{lem:charging}]
For a fixed \(\mu>0\), define
\[z_S=\Prob[S^\mu(\bw)=S,\ U(\bw)\ge\mu]\qquad(S\subseteq[m]).\]
Clearly \(0\le z_S\le p_S^\mu\). For any coordinate \(j\),
\[\begin{aligned}
    \mu\sum_{S\ni j}z_S
    &=\E\!\left[\mu\ind\{U\ge\mu,\ w_j<\mu\}\right]\\
    &\le\E\!\left[X_j^\star(\bw)\ind\{U\ge\mu,\ w_j<\mu\}\right]\\
    &\le \E[X_j^\star(\bw)]\le q_j.
\end{aligned}\]
The first inequality follows point-wise from
\[U\ge\mu,\quad w_j<\mu\quad\Longrightarrow\quad X_j^\star\ge u^\star-w_j\ge2\mu-w_j>\mu.\]
Thus \((z_S)_S\) is feasible for \eqref{lp:one-objective}-\eqref{lp:one-capacity}, and its objective is
\[\mu\sum_S z_S=\mu\Prob[U\ge\mu].\]
\end{proof}

\begin{proof}[Proof of \Cref{thm:one-bit}]
Let \(\delta=\OPT/4\). As \(\E U=\OPT/2\) and \(U\in[0,1/2]\),
\begin{equation}
\E[U\ind\{U\ge\delta\}]\ge\E U-\delta=\frac{\OPT}{4}.\label{eq:retained-U}
\end{equation}
Let
\[A=\sup_{\delta\le\mu\le1/2}\mu\Prob[U\ge\mu].\]
This supremum is attained: \(\mu\mapsto\Prob[U\ge\mu]\) is upper semi-continuous, and the interval is compact. The tail-integral identity on the retained range gives
\begin{align*}
\E[U\ind\{U\ge\delta\}]
&=\delta\Prob[U\ge\delta]+\int_\delta^{1/2}\Prob[U\ge t]\,dt\\
&\le A+A\int_\delta^{1/2}\frac{dt}{t}\\
&=A\left(1+\ln\frac{1/2}{\delta}\right)=A\bigl(1+\ln(2/\OPT)\bigr).
\end{align*}
Together with \eqref{eq:retained-U}, this gives:
\[A\ge\frac{\OPT}{4(1+\ln(2/\OPT))}.\]
By \Cref{lem:charging}, \(R(\mu)\ge\mu\Prob[U\ge\mu]\) for every \(\mu\), so \(\REFone\ge A\).
\end{proof}

\begin{proof}[Proof of \Cref{thm:one-threshold-lower}]
The tail formula gives
\[\E s=\int_0^1\Prob[s\ge t]\,dt=\eps+\int_\eps^1\frac{\eps}{t}\,dt=\eps\bigl(1+\ln(1/\eps)\bigr).\]
Full information allocates \(s\) units of resource-\textit{two} and $0$ of resource-\textit{one}, giving every type utility \(s\) at expected cost \(q_2=\E s\). Conversely, \(q_1=0\) and non-negative top-ups imply \(X_1=0\) almost surely, so utility is at most coordinate-\textit{one}'s \textbf{baseline} \(s\). Hence \(\OPT=\E s\).

Fix a threshold \(\mu\in(0,1]\). If \(\mu\le\eps\), each served type gets at most \(\mu\), so total welfare is at most \(\mu\le\eps\). If \(\mu>\eps\), a type with \(s<\mu\) would need the strictly positive coordinate-\textit{one} top-up \(\mu-s\). But since \(q_1=0\), such a type must have $0$ admission probability. Only the mass with \(s\ge\mu\) can be served, and so:
\[\USW\le\mu\Prob[s\ge\mu]=\mu\frac{\eps}{\mu}=\eps.\]
This argument uses actual top-up cost and applies to every admission kernel measurable with respect to the threshold profile.

At \(\mu=\eps\), admit everyone. Coordinate-\textit{one} already satisfies the threshold; coordinate-\textit{two} spends \(\eps\le\E s=q_2\). Welfare is exactly \(\eps\). Thus the upper-bound is tight.
\end{proof}

\subsection{Ordered ladders}

\begin{proof}[Proof of \Cref{lem:staircase}]
For each type, let
\[\ell(\bw)=\max\bigl(\{\ell\in[r]:\mu_\ell\le U(\bw)\}\cup\{0\}\bigr).\]
For every positive-mass joint bucket profile \(a\), set
\[z_{a\ell}=\Prob[a^{\bmu}(\bw)=a,\ \ell(\bw)=\ell]\qquad(\ell\in[r]).\]
The profile-mass constraints hold as each type is assigned to at most one positive level. If \(j\in S_\ell(a)\) on an assigned type, then \(w_j<\mu_\ell\le U\), and hence \(X_j^\star>\mu_\ell\) by the half-utility inequality. Thus,
\begin{align*}
&\sum_a\sum_{\ell:j\in S_\ell(a)}\mu_\ell z_{a\ell}\\
&\qquad
=
\E\!\left[
        \mu_{\ell(\bw)}\ind\{\ell(\bw)>0,\ w_j<\mu_{\ell(\bw)}\}
    \right]
\le
\E[X_j^\star]
\le 
q_j.
\end{align*}
The constructed solution is feasible for the ladder LP and has objective
\[\sum_{a,\ell}\mu_\ell z_{a\ell}=\E[\mu_{\ell(\bw)}]=\E[g_{\bmu}(U)].\]
\end{proof}

\begin{proof}[Proof of \Cref{thm:geometric-ladder}]
Since \(0<\OPT\le1\), we have \(0<\delta\le1/4\). By definition of \(L\), the first \(L\) levels \(2^{\ell-1}\delta\) lie strictly below \(1/2\), while the last level is \(\mu_{L+1}=1/2\). Thus the displayed ladder is strictly increasing.

For any \(u\in[\delta,1/2]\), the largest ladder point below \(u\) is at least \(u/2\). The geometric ladder rounds every \(u\in[\delta,1/2]\) down by a factor of at most $2$: the selected level is no smaller than \(u/2\). Therefore,
\[g_{\bmu}(U)\ge\frac{U}{2}\ind\{U\ge\delta\}.\]
As in \eqref{eq:retained-U},
\[\E[U\ind\{U\ge\delta\}]\ge\E U-\delta=\frac{\OPT}{4}.\]
Applying \Cref{lem:staircase} gives
\[R(\bmu)\ge\E[g_{\bmu}(U)]\ge\frac12\cdot\frac{\OPT}{4}=\frac{\OPT}{8}.\]
There are \(r=1+\lceil\log_2(2/\OPT)\rceil\) levels and \(r+1\) local buckets, so the exact uplink message length is \(\lceil\log_2(r+1)\rceil\), which is \(O(\log\log(1/\OPT))\) as \(\OPT\downarrow0\).
\end{proof}

\begin{proof}[Proof of \Cref{thm:ladder-lower}]
Consider any ordered ladder \(0<\mu_1<\cdots<\mu_r\le1\) on the equal-revenue family and set \(\mu_0=0\). As \(q_1=0\), non-negativity and ex-ante feasibility imply that coordinate-\textit{one}'s top-up is $0$ almost surely. A type \((s,0)\) can consequently be assigned only to levels \(\mu_\ell\le s\). Its welfare is at most
\[g_{\bmu}(s)=\max\bigl(\{\mu_\ell:\mu_\ell\le s\}\cup\{0\}\bigr).\]
The standard staircase expansion gives:
\begin{equation}
\E[g_{\bmu}(s)]=\sum_{\ell=1}^r(\mu_\ell-\mu_{\ell-1})\Prob[s\ge\mu_\ell].\label{eq:staircase-expansion}
\end{equation}
For levels at or above \(\eps\),
\[(\mu_\ell-\mu_{\ell-1})\Prob[s\ge\mu_\ell]\le\mu_\ell\frac{\eps}{\mu_\ell}=\eps.\]
The increments at levels strictly below \(\eps\) telescope to less than \(\eps\). If there are \(k\ge1\) such levels, the total in \eqref{eq:staircase-expansion} is at most \(\eps+(r-k)\eps\le r\eps\); if there are none, each of the \(r\) terms is at most \(\eps\). Thus every feasible kernel has welfare at most \(r\eps\).

If the protocol obtains at least \(c\OPT\) for a fixed \(c>0\), then
\[r\eps\ge c\eps\bigl(1+\ln(1/\eps)\bigr),\qquad\text{so}\qquad r\ge c\bigl(1+\ln(1/\eps)\bigr).\]
Under the full-alphabet, fixed-length convention of \Cref{sec:model}, an \(r\)-level ordered protocol uses \(r+1\) local buckets and so \(\lceil\log_2(r+1)\rceil\) bits per agency. Finally,
\[\ln(1/\OPT)=\ln(1/\eps)-\ln\bigl(1+\ln(1/\eps)\bigr)=\Theta(\ln(1/\eps))\]
as \(\eps\downarrow0\). The claimed level and bit lower-bounds then follow.
\end{proof}

\subsection{Implementation}

\begin{proof}[Proof of \Cref{prop:implementation}]
For one threshold, profiles with \(p_S^\mu=0\) can be omitted from \eqref{lp:one-objective}-\eqref{lp:one-capacity}; this leaves one variable per positive-mass profile and \(m\) capacity constraints, in addition to box constraints. Given a feasible solution, the clearing rule admits conditional on profile \(S\) with probability \(z_S/p_S^\mu\). Its \emph{actual} expected load in coordinate \(j\) is
\[\sum_{S\ni j}z_S\,\E[(\mu-w_j)_+\mid S,\text{ admitted}]\le\mu\sum_{S\ni j}z_S\le q_j.\]
Conditional on the reported profile, admission is independent of raw coordinates, so this is a valid profile kernel.

For a ladder, omit profiles with \(p_a=0\). There are at most \(r\) variables \(z_{a\ell}\) per remaining profile. Conditional on profile \(a\), choose level \(\ell\) with probability \(z_{a\ell}/p_a\) and choose $0$ service with the remaining probability. For each coordinate, \emph{actual} expected top-up is at most
\[\sum_a\sum_{\ell:j\in S_\ell(a)}\mu_\ell z_{a\ell}\le q_j\]
by \eqref{lp:ladder-capacity}. This proves the stated size and stochastic implementation claims.
\end{proof}

\end{document}